\newtheorem{thm}{Theorem}
\newtheorem{prop}[thm]{Proposition}
\newtheorem{cor}[thm]{Corollary}
\theoremstyle{definition}
\theoremstyle{remark}
\numberwithin{equation}{section}
\newcommand{\hur}{H}
\newcommand{\mon}{\vec{H}}
\newcommand{\Tr}{\operatorname{Tr}}
\newcommand{\diag}{\operatorname{diag}}
\begin{document}


\title[Lozenge Tilings and Hurwitz Numbers]{Lozenge Tilings
and Hurwitz Numbers}

\author{Jonathan Novak}
\address{Department of Mathematics, Massachusetts Institute of Technology, USA}
\email{jnovak@math.mit.edu}



 \maketitle


\begin{abstract}
    We give a new proof of the fact that,
    near a turning point of the frozen 
    boundary, the vertical tiles in a
    uniformly random lozenge tiling of a 
    large sawtooth domain are distributed like the eigenvalues
    of a GUE random matrix.  Our argument 
    uses none of the standard tools of integrable probability.
    In their place, it uses a combinatorial interpretation
    of the Harish-Chandra/Itzykson-Zuber integral
    as a generating function for desymmetrized
    Hurwitz numbers.
\end{abstract}

\section{Introduction}

\subsection{}
Let

	\begin{equation*}
		\label{eqn:TriangularArray}
			\begin{matrix}
				b_1^{(1)} & {} & {} & {} \\
				b_1^{(2)} & b_2^{(2)} & {} & {} \\
				b_1^{(3)} & b_2^{(3)} & {b_3^{(3)}} & {} \\
				\vdots & \vdots & \vdots & \ddots
			\end{matrix}
	\end{equation*} 
	   
\noindent
be a triangular array of integers,
the elements of which are strictly 
decreasing along rows.  The array
\eqref{eqn:TriangularArray} gives
rise to a sequence $\Omega^{(N)}$ of planar 
domains via the following construction.
Fix a coordinate system in the plane whose axes meet at a 
$120^{\degree}$ angle.  
We specify $\Omega^{(N)}$ by specifying its boundary,
which consists of two piecewise linear components.  
One component of
$\partial\Omega^{(N)}$ --- the lower boundary --- is simply the
horizontal axis in the plane.  The other component --- the 
upper boundary --- is built in three steps.  First,
construct the line parallel to the lower boundary passing 
through the point $(0,N)$.  Second, affix $N$ outward-facing 
unit triangles to this line such that the midpoints of their 
bases have horizontal coordinates 
$b_1^{(N)} > \dots > b_{N}^{(N)}$.
Finally, erase the bases of these triangles.  We will refer to
$\Omega^{(N)}$ as the \emph{sawtooth domain} of rank $N$ with
boundary conditions $(b_1^{(N)},\dots,b_{N}^{(N)})$.

\begin{figure}
	\includegraphics{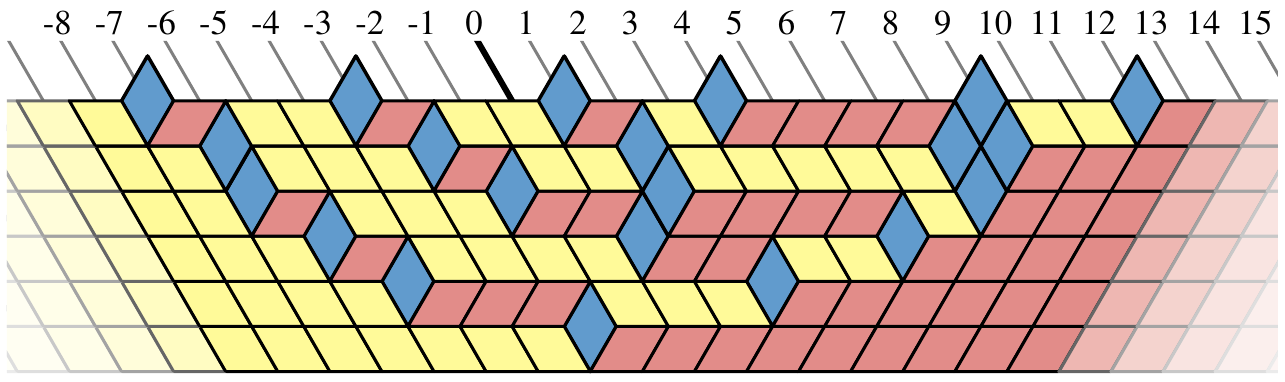}
	\caption{\label{fig:Tiling}A lozenge tiling of
	a sawtooth domain of rank $6$.}
\end{figure}

\subsection{}
A \emph{lozenge} is a unit rhombus in the plane whose
sides are parallel to one of the coordinate axes, or to the line
bisecting the obtuse angle between them.  Lozenges
are thus divided into three classes: left-leaning, right-leaning,
and vertical.  Given a lozenge tiling of $\Omega^{(N)}$,
as in Figure \ref{fig:Tiling}, the horizontal
line through $(0,k)$ ``threads'' exactly 
$k$ vertical tiles, or ``beads'',
and the beads on adjacent 
threads interlace, as in Figure
\ref{fig:Beads}.

Let $T^{(N)}$ be a
uniformly random lozenge tiling of $\Omega^{(N)}$,
and let $b_{k1}^{(N)} > \dots > b_{kk}^{(N)}$
be the horizontal coordinates of the
centroids of the beads on the
$k$th thread through $T^{(N)}$.
The main result of this note is a limit theorem
for the $k$-dimensional
random vector $(b_{k1}^{(N)},\dots,b_{kk}^{(N)})$,
in the regime where $N \rightarrow \infty$ 
with $k$ fixed.

\subsection{}
Suppose there exists a positive integer $M$ such
that, for each $N \geq 1$,

\begin{equation*}
    \{b_1^{(N)} > \dots > b_N^{(N)}\} \subseteq
    \{MN > \dots > -MN\}.
\end{equation*}

\noindent
Let $\nu^{(N)}$ be the probability measure which
places mass $1/N$ at each of the points $b_i^{(N)}/N$.
Suppose that $\nu^{(N)}$ converges weakly to $\nu$, the 
probability measure on $[-M,M]$ with moment 
sequence $\psi_1,\psi_2,\psi_3,\dots$.

\begin{thm}
	\label{thm:Main}
	For each $N \geq 1$ and $1\leq k\leq N$, set 
	
		\begin{equation*}
			\tilde{b}_{kl}^{(N)} = 
			\frac{\frac{b_{kl}^{(N)}}{\sqrt{N}}-
			(\psi_1-\frac{1}{2})\sqrt{N}}
			{\psi_2-\psi_1^2-\frac{1}{12}}, \quad 
			1 \leq l \leq k.
		\end{equation*}
		
	\noindent
	For any fixed $k$, the random vector 
	$(\tilde{b}_{k1}^{(N)},\dots,\tilde{b}_{kk}^{(N)})$
	converges weakly to the ordered list of 
	eigenvalues of a $k \times k$
	GUE random matrix as $N \rightarrow \infty$.
\end{thm}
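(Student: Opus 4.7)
The natural plan is a method-of-moments argument based on the correspondence between lozenge tilings of $\Omega^{(N)}$ and Gelfand--Tsetlin patterns with prescribed top row $b_1^{(N)} > \cdots > b_N^{(N)}$. Under this correspondence, the bead coordinates $b_{k1}^{(N)} > \cdots > b_{kk}^{(N)}$ on the $k$th thread form the length-$k$ row of a uniformly random such pattern, and a classical branching computation expresses their joint distribution as
\[
\mathbb{P}\bigl(b_{k\bullet}^{(N)} = \mu\bigr) \;\propto\; s_{\mu}(1^k)\, s_{\lambda/\mu}(1^{N-k}),
\]
where $\lambda$ encodes the boundary $(b_1^{(N)},\dots,b_N^{(N)})$ and $s$ denotes the Schur polynomial.

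The first substantive step is to compute the exponential generating function $F_N(z_1,\dots,z_k) = \mathbb{E}\bigl[\exp\sum_l z_l b_{kl}^{(N)}\bigr]$. Summing against the distribution above and applying the skew Cauchy identity $\sum_\mu s_\mu(x) s_{\lambda/\mu}(y) = s_\lambda(x,y)$ collapses the sum to a single Schur polynomial, and the Weyl character formula then identifies the resulting object, up to Vandermonde factors, with the Harish-Chandra/Itzykson-Zuber integral
\[
I_N(A,B) = \int_{\group{U}(N)} e^{\Tr(AUBU^*)}\, dU
\]
in which $A = \diag(b_1^{(N)},\dots,b_N^{(N)})$ and $B = \diag(z_1,\dots,z_k,0,\dots,0)$. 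After centering by $(\psi_1-\tfrac{1}{2})\sqrt{N}$ and rescaling $z_l \mapsto z_l/\sqrt{N}$ according to the theorem, one obtains an explicit integral representation for the MGF of $(\tilde{b}_{k1}^{(N)},\dots,\tilde{b}_{kk}^{(N)})$.

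The crucial step is to replace $I_N$ by its combinatorial expansion as a generating function for desymmetrized (monotone) Hurwitz numbers, as advertised in the abstract. In this expansion each term is a monomial in the normalized traces $N^{-j}\Tr A^j$, which by hypothesis converge to the moments $\psi_j$ of the limit measure $\nu$. With the scaling $z_l \mapsto z_l/\sqrt{N}$ in place, the powers of $N$ organize themselves so that only planar (genus-zero) monotone factorizations survive at leading order, and the resulting limiting generating function should be identifiable with the well-known MGF of the $k\times k$ GUE eigenvalues, whose joint density is $\prod_{i<j}(\lambda_i-\lambda_j)^2 \prod_i e^{-\lambda_i^2/2}$ up to normalization.

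The main obstacle will be the asymptotic analysis of the monotone Hurwitz sum: one needs sharp uniform estimates showing that positive-genus contributions are genuinely subleading under the $1/\sqrt{N}$ scaling, together with a precise identification of the surviving planar terms with the Wick pairings that characterize the GUE. The particular constants $\psi_1 - \tfrac{1}{2}$ (centering) and $\psi_2-\psi_1^2-\tfrac{1}{12}$ (variance) in the rescaling should emerge from the leading and next-to-leading terms of the Hurwitz expansion, with the $\tfrac{1}{2}$ and $\tfrac{1}{12}$ plausibly arising from discrete-to-continuous (Euler--Maclaurin) corrections that reflect the lattice nature of the bead positions; tracking these constants through the expansion is the delicate combinatorial heart of the argument.
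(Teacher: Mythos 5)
Your proposal follows essentially the same route as the paper: reduce the joint Laplace transform of the $k$th row to an $N\times N$ HCIZ integral via the branching rule (the resulting factor $\prod_i \bigl(a_i/(e^{a_i}-1)\bigr)^{N-k}$ is precisely your ``Euler--Maclaurin correction,'' being the Laplace transform of uniform measure on $[0,1]$, whence the $\tfrac{1}{2}$ and $\tfrac{1}{12}$), then expand $\log$ of the HCIZ integral in monotone double Hurwitz numbers, show via a crude uniform bound on $\vec{H}_g(\alpha,\beta)$ that higher genera are negligible, and conclude with the L\'evy continuity theorem. One point to sharpen: under the $N^{-1/2}$ scaling it is not that all planar terms survive --- every term of degree $d\geq 3$ scales as $N^{1-d/2}\to 0$, so only the $d=1$ (centering) and $d=2$ (variance) genus-zero terms contribute, giving exactly the Gaussian log-Laplace transform $\tfrac{1}{2}\Tr A^2$; also the identity you invoke is the branching/coproduct rule for Schur functions rather than the skew Cauchy identity.
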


Note that $\psi_1$ and $\psi_2-\psi_1^2$ are,
respectively, the mean and variance of $\nu$,
while the numbers $1/2$ and $1/12$ are
the mean and variance of the uniform 
probability measure on $[0,1]$.  

\subsection{}
Given that the law of large numbers for $T^{(N)}$
manifests as the convergence 
of the height function of the normalized tiling 
$N^{-1}T^{(N)}$ to a deterministic 
limit, the so-called \emph{limit shape} \cite{DM,KO,Petrov},
the $N^{-1/2}$ scaling
in Theorem
\ref{thm:Main} is natural.
Indeed, as discussed in
\cite{OR}, the arctic curve separating the
frozen and liquid regions of $T^{(N)}$ which
emerge as $N \rightarrow \infty$
resembles a parabola near the point where
it is tangent to the lower boundary of
$\Omega^{(N)}$.  For boundary conditions producing
an arctic curve which actually is a parabola,
see \cite{NY1,NY2}.

\begin{figure}
	\includegraphics{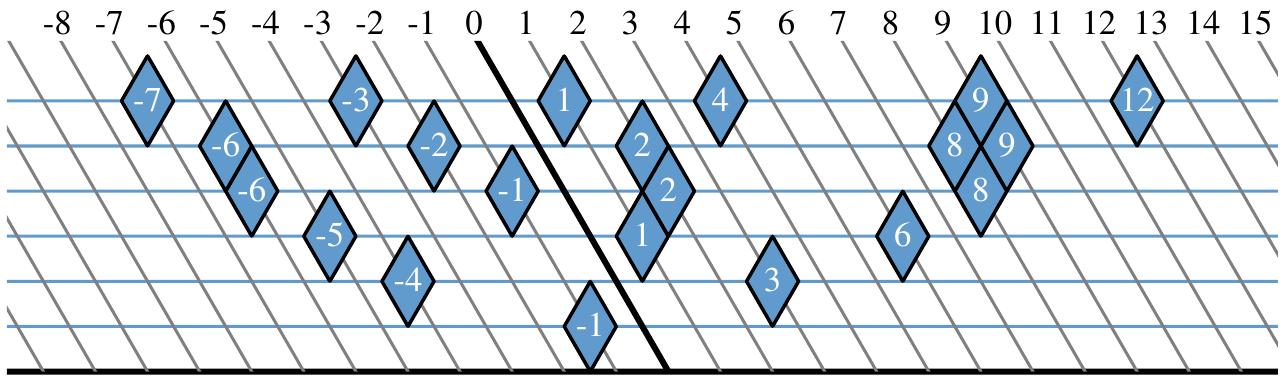}
	\caption{\label{fig:Beads}Interlacing beads and their coordinates.}
\end{figure}

\subsection{}
The connection between the joint distribution
of vertical tiles near the frozen boundary and
GUE eigenvalues was first studied by 
by Okounkov and Reshetikhin \cite{OR}.  
For a special class of boundary conditions,
Theorem \ref{thm:Main} was proved
by Johansson and Nordenstam \cite{JN}.  
In a slightly different (but equivalent) form, 
Theorem \ref{thm:Main} was obtained in full
generality by Gorin and Panova
\cite{GP} as a consequence of their general approach to 
Schur function asymptotics.  
In this note, we present a different approach
to Theorem \ref{thm:Main} in which the 
usual tools of integrable probability (e.g. determinantal processes,
steepest descent analysis) play no role.  Instead,
our argument is based on the combinatorial
interpretation of the Harish-Chandra/Itzykson-Zuber
integral discovered in \cite{GGN3}.

\subsection{}
Work on this paper began while the author was
a Professeur Invit\'e at
Universit\' e Paris Diderot in
the Spring of 2014.  I am grateful to
G. Chapuy and S. Corteel for
the invitation to visit.  
While writing this article, I benefited
from stimulating correspondence 
with V. Gorin and G. Panova.
I am indebted to
M. Lacroix for producing the figures which
accompany this note.

\section{Proof of Theorem \ref{thm:Main}}

\subsection{}
Let us replace the 
$k$-dimensional random vector 
$(b_{k1}^{(N)},\dots,b_{kk}^{(N)})$
with the random Hermitian matrix

	\begin{equation*}
		B_k^{(N)} = U_k
		\begin{bmatrix}
			b_{k1}^{(N)} & {} & {} \\
			{} & \ddots & {} \\
			{} & {} & b_{kk}^{(N)}
		\end{bmatrix}
		U_k^{-1},
	\end{equation*}

\noindent
where $U_k$ is a random matrix drawn from 
normalized Haar measure on the
unitary group $U(k)$.  By the \emph{Laplace transform} of 
$B_k^{(N)}$, we mean the function on $k \times k$ 
complex semisimple matrices $A$ defined by 

	\begin{equation*}
	\label{eqn:Laplace}
		A \mapsto \mathbf{E}[e^{\Tr AB_k^{(N)}}],
	\end{equation*}
	
\noindent
where $\mathbf{E}$ denotes expectation.  In the 
case $k=1$, this function coincides with the classical
two-sided Laplace transform encoding the distribution
of the horizontal coordinate of the bottom bead.

\subsection{}
The Laplace transform of $B_k^{(N)}$
depends only on the 
eigenvalues of $A$, and 
thus may be considered as a function
of $k$ complex variables.  This function
is analytic, because the distribution
of $B_k^{(N)}$ in $H(k)$, the space of
$k \times k$ Hermitian matrices,
is compactly supported.
Explicitly,

	\begin{equation*}
	\begin{split}
		L_k^{(N)}(a_1,\dots,a_k) &= 
		\sum_{\{b_1 > \dots > b_k\} \subset \mathbb{Z}}
		\mathbf{P}(b_{k1}^{(N)}=b_1,\dots,b_{kk}^{(N)}=b_k) \\
		&\times\int\limits_{U(k)} e^{\Tr \diag(a_1,\dots,a_k)U
		\diag(b_1,\dots,b_k)U^{-1}} \mathrm{d}U,
	\end{split}
	\end{equation*}

\noindent
where the sum is over all $k$-point
particle configurations on the integer
lattice and $\mathbf{P}$ is the uniform 
probability measure on lozenge 
tilings of $\Omega^{(N)}$.  The integral
over $U(k)$ is just
the Laplace transform of the uniform 
probability measure
on the set of $k \times k$ Hermitian
matrices with eigenvalues 
$b_1 > \dots > b_k$.  That is,
$L_k^{(N)}$ is
the Laplace transform of a mixture of
\emph{orbital measures}.
If $k=N$, the bead locations are deterministic,
and we are dealing with the Laplace transform of a 
pure orbital measure.  The following proposition
reduces our workload to the analysis of the 
Laplace transforms of pure orbital measures.

\begin{prop}
\label{prop:Key}
For any integers
$1 \leq k \leq N$,

	\begin{equation*}
		L_k^{(N)}(a_1,\dots,a_k) = 
		\left( \prod_{i=1}^k \frac{a_i}{e^{a_i}-1} 
		\right)^{N-k} L_N^{(N)}(a_1,\dots,a_k,0,\dots,0).
	\end{equation*}
	
\end{prop}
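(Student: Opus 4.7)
The plan is to reduce the proposition, by induction on $N - k$, to the single-level identity
\[
L_k^{(N)}(a_1,\ldots,a_k) = \left(\prod_{i=1}^k \frac{a_i}{e^{a_i}-1}\right) L_{k+1}^{(N)}(a_1,\ldots,a_k,0).
\]
The base case $N = k$ is trivial. Iterating $N - k$ times, and using that $\lim_{x \to 0} x/(e^x - 1) = 1$ so that the trailing zero arguments introduced at each step do not contribute to the prefactor, yields the proposition in the stated form.

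To prove the single-level identity, I would condition on the beads $\mathbf{b}_{k+1}^{(N)}$ at level $k+1$. Under the uniform measure on lozenge tilings---equivalently, the uniform measure on integer Gelfand--Tsetlin arrays with top row $(b_1^{(N)},\ldots,b_N^{(N)})$---the conditional distribution of $\mathbf{b}_k^{(N)}$ given $\mathbf{b}_{k+1}^{(N)} = b$ assigns each strict interlacing $c \prec b$ the probability $\dim(c)/\dim(b)$, where $\dim(c) = \Delta(c)/\prod_{j=1}^{k-1} j!$ is the Weyl dimension counting GT subarrays with top row $c$. Writing the size-$k$ HCIZ integral via the Harish-Chandra formula $I_k(a,c) = \prod_{j=1}^{k-1} j! \cdot \det(e^{a_i c_j})/(\Delta(a)\Delta(c))$, the factor $\Delta(c)$ cancels against $\dim(c)$, and the conditional expectation reduces to $(\Delta(a)\dim(b))^{-1} \sum_{c \prec b} \det(e^{a_i c_j})_{i,j=1}^k$.

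The key combinatorial step is to evaluate this determinant sum. Because each coordinate $c_j$ of a strict interlacing ranges independently over the disjoint integer interval $\{b_{j+1},\ldots,b_j - 1\}$, the sum passes inside the determinant and each entry collapses to a geometric sum $(e^{a_i b_j} - e^{a_i b_{j+1}})/(e^{a_i} - 1)$. Factoring $\prod_i (e^{a_i} - 1)^{-1}$ out of the rows and performing column operations that telescope the column differences against an appended bottom row of ones (encoding the extra argument $a_{k+1} = 0$), one obtains
\[
\sum_{c \prec b} \det(e^{a_i c_j})_{i,j=1}^k = \prod_{i=1}^k \frac{1}{e^{a_i} - 1} \cdot \det(e^{a_i b_j})_{i,j=1}^{k+1} \bigg|_{a_{k+1} = 0}.
\]
Applying HCIZ at size $k+1$ on the right and using $\Delta(a_1,\ldots,a_k,0) = \Delta(a)\prod_{i=1}^k a_i$, the prefactor $\prod_i (e^{a_i} - 1)^{-1}$ is promoted to $\prod_i a_i/(e^{a_i} - 1)$, while the remaining factorials balance against $\Delta(b)/\dim(b) = \prod_{j=1}^k j!$. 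The conditional expectation therefore simplifies exactly to $\prod_i (a_i/(e^{a_i}-1)) \cdot I_{k+1}((a,0), b)$, and averaging over $\mathbf{b}_{k+1}^{(N)}$ closes the induction.

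The only genuinely nontrivial step is the column-determinant identity displayed above, and this is the main obstacle I anticipate. Everything else amounts to careful bookkeeping with the Vandermondes $\Delta(a), \Delta(a,0), \Delta(c), \Delta(b)$ and with the factorials coming from the Weyl dimension formula and the HCIZ normalization.
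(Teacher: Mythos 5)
Your proof is correct and is essentially the paper's argument in unpacked form: the paper combines the Harish-Chandra/Itzykson-Zuber identity (HCIZ integral $=$ normalized $GL$ character in the variables $e^{a_i}$) with the $GL(N)\downarrow GL(N-1)$ branching rule, iterated $N-k$ times, which is exactly your induction on $N-k$ after translating the conditional law $\dim(c)/\dim(b)$ into character language. The ``genuinely nontrivial'' determinant identity you flag as the main obstacle is not an obstacle at all --- summing $\det(e^{a_ic_j})$ over the disjoint intervals $c_j\in\{b_{j+1},\dots,b_j-1\}$, telescoping the geometric sums, and expanding along an appended row of ones is precisely the standard bialternant proof of the branching rule, which the paper simply cites (Goodman--Wallach, Chapter 8) rather than reproving.
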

	
\begin{proof}
    The proof is a combination of
    three standard facts from
    the representation theory of the complex
    general linear group $GL(N)$.
    
    First, the isomorphism classes
    of irreducible rational representations
    of $GL(N)$ are indexed by $N$-point particle
    configurations on $\mathbb{Z}$.
    This is a classical result,
    see e.g. \cite{Weyl}.
    
    Second,
    given a particle configuration
    $\{b_1 > \dots > b_N\} \subset \mathbb{Z}$, 
    the corresponding normalized irreducible character
    
    \begin{equation*}
        \frac{\chi^{(b_1,\dots,b_N)}(e^{a_1},\dots,e^{a_N})}
        {\chi^{(b_1,\dots,b_N)}(1,\dots,1)}
    \end{equation*}

    \noindent
    equals the twisted Laplace transform
    
    \begin{equation*}
        \prod_{1\leq i<j \leq N} \frac{a_i-a_j}{e^{a_i}-e^{a_j}} 
        \int\limits_{U(N)}
        e^{\Tr \diag(a_1,\dots,a_N)U\diag(b_1,\dots,b_N)U^{-1}}
        \mathrm{d}U
    \end{equation*}
    
    \noindent
    of the uniform measure on Hermitian 
    matrices with spectrum $\{b_1 > \dots > b_N\}$.
    This identity is independently due to Harish-Chandra 
    \cite{HC}, and Itzykson and Zuber \cite{IZ} 
    --- it is the $U(N)$ case of the Kirillov
    character formula \cite{Kirillov}.
    
    The third and final ingredient is
    the branching rule for irreducible
    characters of $GL(N)$ under restriction
    to $GL(N-1)$:
    
    \begin{equation*}
        \chi^{(b_1,\dots,b_N)}(e^{a_1},\dots,e^{a_{N-1}},1)
        =\sum_{\{c_1 > \dots > c_{N-1}\}\subset \mathbb{Z}} 
        \chi^{(c_1,\dots,c_{N-1})}(e^{a_1},\dots,e^{a_{N-1}}),
    \end{equation*}

    \noindent
    where the sum is over all configurations of
    $N-1$ particles on $\mathbb{Z}$ which interlace
    with the configuration $\{b_1 > \dots > b_N\}$.
    A proof of the branching rule may be found in
    \cite[Chapter 8]{GW}.  Iterating the branching rule
    $N-k$ times and applying the Harish-Chandra formula
    yields the stated formula for $L_k^{(N)}$ in 
    terms of $L_N^{(N)}$.

\end{proof}

\subsection{}
Consider the analytic function 
$\mathbb{C} \times \mathbb{C}^N \times \mathbb{C}^N 
\rightarrow \mathbb{C}$
defined by

\begin{equation*}
\label{eqn:HCIZ}
    (z;a_1,\dots,a_N;b_1,\dots,b_N) \mapsto
    \int\limits_{U(N)} e^{z\Tr 
    \diag(a_1,\dots,a_N)U\diag(b_1,\dots,b_N)U^{-1}}
        \mathrm{d}U.
\end{equation*}

\noindent
This is the famous Harish-Chandra/Itzykson-Zuber 
integral.  The parameter $z$ may
be called the \emph{coupling constant}, as a 
reference to its origin in the spectral analysis
of coupled random semisimple
matrices with $AB$-interaction
\cite{AvM,IZ}.

The HCIZ integral enjoys
a natural $S(N) \times S(N)$ symmetry: it
is invariant under permutation of the $a$'s
amongst themselves, and the $b$'s amongst 
themselves.
Combining this symmetry
with the fact that the Newton power-sums
form a linear basis of the algebra of
symmetric polynomials, we may present the Maclaurin 
series of the logarithm of the HCIZ 
integral in the form

\begin{equation*}
\begin{split}
    &\log \int\limits_{U(N)} e^{z\Tr 
    \diag(a_1,\dots,a_N)U\diag(b_1,\dots,b_N)U^{-1}}
        \mathrm{d}U \\
    =&\sum_{d=1}^{\infty} \frac{z^d}{d!}
    \sum_{\alpha,\beta \vdash d} C_N(\alpha,\beta)
    p_{\alpha}(a_1,\dots,a_N)p_{\beta}(b_1,\dots,b_N),
\end{split}
\end{equation*}

\noindent
where the internal sum is over all pairs
of Young diagrams with $d$ cells.

\subsection{}
The coefficients $C_N(\alpha,\beta)$
have the following combinatorial 
interpretation.  Consider the
Cayley graph of the symmetric group $S(d)$ as generated
by the conjugacy class of transpositions. 
Equip this graph with the Biane-Stanley
edge labelling \cite{Biane,Stanley}, wherein each
edge corresponding to 
the transposition $(s\ t)$ is tagged
with $t$, the larger of the
two numbers interchanged.  The
$d=4$ case is
shown in Figure \ref{fig:Cayley}, where $2$-edges
are drawn in blue, $3$-edges in yellow, and $4$-edges
in red.
A walk on the Cayley graph is said to be
\emph{monotone} if the labels of the edges it traverses
form a weakly increasing sequence.
A walk is \emph{transitive} if its steps
and endpoints together generate a transitive
subgroup of $S(d)$.  Given
two partitions $\alpha,\beta \vdash d$,
and a nonnegative integer $r$, let $\mon^r(\alpha,\beta)$
be the number of $r$-step monotone,
transitive walks on $S(d)$ which
begin at a permutation of cycle type $\alpha$ and
end at a permutation of cycle type $\beta$.

\begin{thm}[\cite{GGN3}]
\label{thm:LeadingDerivatives}
    For any $1 \leq d \leq N$, and
    any $\alpha,\beta \vdash d$, 
    we have
    
    \begin{equation*}
    C_N(\alpha,\beta) = \frac{1}{N^d}
    \sum_{r=0}^{\infty} (-1)^r
    \frac{\mon^r(\alpha,\beta)}{N^r}.
    \end{equation*}
\end{thm}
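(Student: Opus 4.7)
The plan is to expand the HCIZ integral in Schur functions, identify the content products appearing in the denominator with eigenvalues of products of Jucys-Murphy elements, invert those products formally in $1/N$ to produce monotone walks on the Cayley graph of $S(d)$, and finally take logarithms via the exponential formula to enforce transitivity.

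First, I would derive the Schur expansion
\begin{equation*}
    \int\limits_{U(N)} e^{z\Tr AUBU^{-1}}\,\mathrm{d}U = \sum_{\lambda:\,\ell(\lambda)\le N} \frac{z^{|\lambda|}}{\prod_{c\in\lambda}(N+c)}\,s_\lambda(a)s_\lambda(b),
\end{equation*}
by combining the identity $p_1^d = \sum_{\lambda\vdash d} d_\lambda s_\lambda$ (with $d_\lambda = \dim V^\lambda$) with Weyl's invariant integral $\int_{U(N)} s_\lambda(AUBU^{-1})\,\mathrm{d}U = s_\lambda(a)s_\lambda(b)/s_\lambda(1^N)$ and the hook-content product formula $s_\lambda(1^N) = d_\lambda\prod_c(N+c)/d!$. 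Converting Schur functions to power sums through $s_\lambda = \sum_\rho z_\rho^{-1}\chi^\lambda_\rho p_\rho$ and using the standard reconstruction of a central element of $\mathbb{C}[S(d)]$ from its eigenvalues on irreducibles, the $p_\alpha(a)p_\beta(b)$-coefficient of the integral becomes
\begin{equation*}
    \frac{z^d}{d!}\,[e]\biggl(\mathbf{C}_\alpha\,\mathbf{C}_\beta\prod_{k=1}^d(N+J_k)^{-1}\biggr),
\end{equation*}
where $\mathbf{C}_\gamma$ is the class sum of cycle type $\gamma$ in $\mathbb{C}[S(d)]$, $[e]$ extracts the coefficient of the identity permutation, and $J_k = \sum_{i<k}(i\ k)$ is the $k$th Jucys-Murphy element --- the essential point being that $\prod_k(N+J_k)$ acts on $V^\lambda$ as the scalar $\prod_{c\in\lambda}(N+c)$.

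Next, I would formally invert in $1/N$,
\begin{equation*}
    \prod_{k=1}^d(N+J_k)^{-1} = \frac{1}{N^d}\sum_{r\ge 0}\frac{(-1)^r}{N^r}\,h_r(J_1,\dots,J_d),
\end{equation*}
observing that the Jucys-Murphy elements commute pairwise (so $h_r$ is unambiguously defined) and that expanding each $J_k$ in transpositions realizes $h_r(J_1,\dots,J_d)$, as an element of $\mathbb{C}[S(d)]$, as the sum over all length-$r$ monotone words on the Biane-Stanley-labelled Cayley graph starting at the identity. Extracting the coefficient of $e$ in $\mathbf{C}_\alpha\mathbf{C}_\beta h_r(J_1,\dots,J_d)$ reduces to counting triples $(\sigma_0,\sigma_r,w)$ with $\sigma_0$ of cycle type $\alpha$, $\sigma_r$ of cycle type $\beta$, and $w$ a monotone walk of length $r$ from $\sigma_0$ to $\sigma_r$; hence the $p_\alpha(a)p_\beta(b)$-coefficient of the integral equals $\frac{z^d}{d!\,N^d}\sum_r(-1)^r N^{-r}\hur^r(\alpha,\beta)$, where $\hur^r(\alpha,\beta)$ counts all (not necessarily transitive) such walks.

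Finally I would take the logarithm and invoke the exponential formula. A monotone-walk structure on $\{1,\dots,d\}$ splits canonically along the orbits of the subgroup generated by its endpoints and steps, and the Biane-Stanley label of every transposition lies inside its ambient orbit, so the label multisets of walks on distinct orbits are pairwise disjoint; a family of monotone sub-walks on the orbits therefore admits a unique interleaving into a global monotone walk. Since the weight $p_\alpha p_\beta (-1)^r N^{-(d+r)}$ decomposes multiplicatively across orbits, the exponential formula applies on the nose and converts $\hur^r$ to $\mon^r$, yielding the stated identity. The main obstacle is precisely this last point: showing that the Biane-Stanley labelling is rigid under orbit restriction and therefore compatible with the exponential-formula decomposition into transitive sub-structures, a step where the combinatorics of monotone walks meaningfully interacts with the group-theoretic transitivity condition.
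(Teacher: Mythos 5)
Your argument is correct, and it is essentially the proof given in [GGN3] (which this paper cites for the theorem rather than reproving it): the Schur/content expansion of the HCIZ integral, the identification of $\prod_{c\in\lambda}(N+c)^{-1}$ with the inverse of $\prod_k(N+J_k)$ acting on $V^\lambda$, the expansion of that inverse in complete homogeneous symmetric polynomials of Jucys--Murphy elements to produce monotone walks, and the passage to transitive walks via the exponential formula, whose applicability rests exactly on the point you single out --- labels of transpositions supported on disjoint orbits are distinct, so a family of monotone sub-walks interleaves uniquely into a global monotone walk. The only blemish is notational: you recycle $H^r(\alpha,\beta)$ for the count of not-necessarily-transitive \emph{monotone} walks, whereas in this paper $H^r(\alpha,\beta)$ denotes the ordinary (transitive, non-monotone) double Hurwitz number, so you should choose a different symbol for the disconnected monotone count.
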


\begin{figure}
	\includegraphics{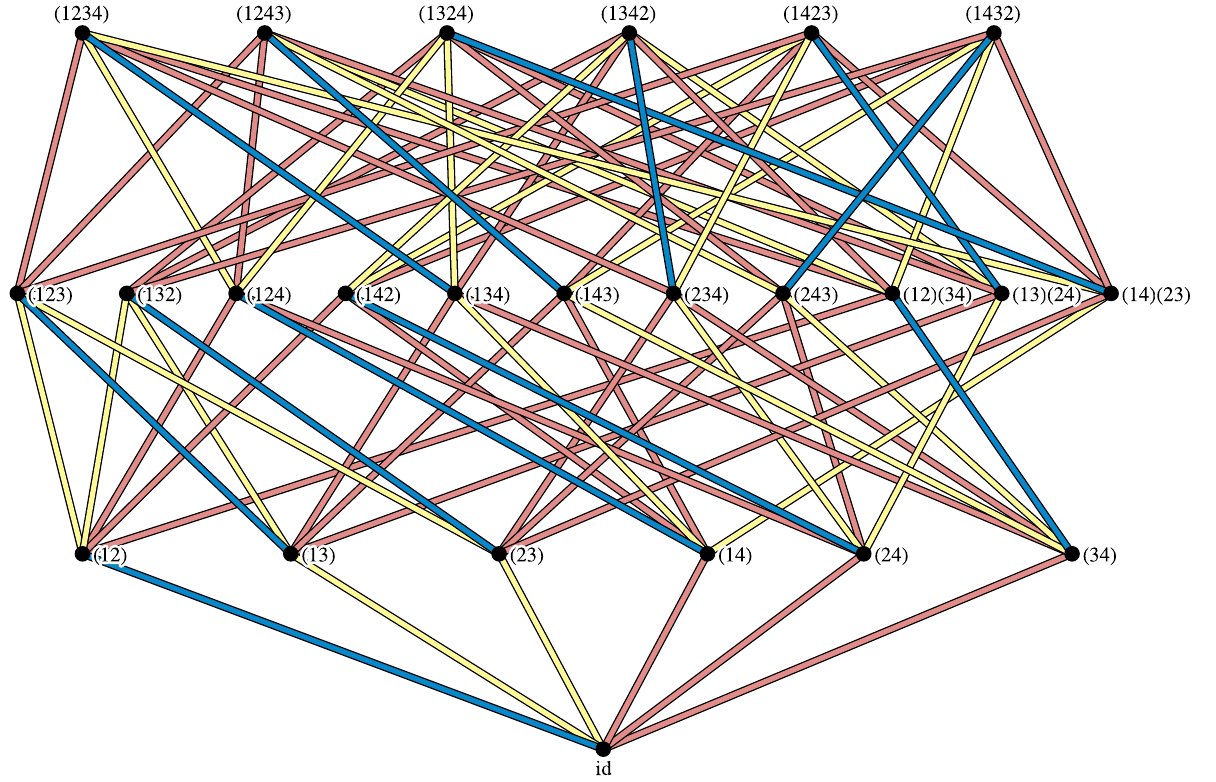}
	\caption{\label{fig:Cayley} $S(4)$ with the Biane-Stanley
	edge-labelling.}
\end{figure}

\subsection{}
The number $\hur^r(\alpha,\beta)$, which counts walks 
as above, but without the monotonicity constraint, is
a \emph{double Hurwitz number}.  The double Hurwitz
numbers are important quantities in classical 
and modern enumerative geometry, see 
\cite{GJV,Okounkov}.
Reversing a classical construction due to
Hurwitz \cite{Hurwitz}, we have that

    \begin{equation*}
    \frac{1}{d!} \hur^r(\alpha,\beta) =
    \sum_{(X,f)} \frac{1}{|\operatorname{Aut}(X,f)|},
    \end{equation*}

\noindent
where the sum runs over all isomorphism
classes of pairs $(X,f)$ in which $X$ is
a compact, connected Riemann surface and $f:X \rightarrow 
\mathbf{P}^1$ is a degree $d$ mapping 
to the Riemann sphere with profile 
$\alpha$ over $\infty$, profile $\beta$
over $0$, and simple ramification over 
the $r$th roots of unity.
By the Riemann-Hurwitz formula, such a branched
covering exists if and only if

    \begin{equation*}
    \label{eqn:RiemannHurwitz}
        g = \frac{r+2-\ell(\alpha)-\ell(\beta)}{2}
    \end{equation*}
    
\noindent
is a non-negative integer, in which
case $g$ is the genus of $X$.  Here
$\ell(\alpha)$ is the number of parts
in the partition $\alpha \vdash d$, and
likewise for $\ell(\beta)$.
We write 
$\hur^r(\alpha,\beta) = \hur_g(\alpha,\beta)$,
with the understanding that $r$
and $g$ determine one another via 
Riemann-Hurwitz.

\subsection{}
Following the terminology of \cite{GGN3},
we refer to the numbers 
$\mon^r(\alpha,\beta)=\mon_g(\alpha,\beta)$ as the
\emph{monotone} double Hurwitz numbers.
The expansion in Theorem
\ref{thm:LeadingDerivatives} may
equivalently be written

    \begin{equation}
    \label{eqn:LeadingDerivatives}
        C_N(\alpha,\beta) =
        (-1)^{\ell(\alpha) + \ell(\beta)}
        N^{2-d-\ell(\alpha)-\ell(\beta)}
        \sum_{g \geq 0} \frac{\mon_g(\alpha,\beta)}{N^{2g}}.
    \end{equation}
    
\noindent
This expansion renders the asymptotics 
of the HCIZ integral transparent in 
virtually any scaling regime.  In particular,
one obtains the following limits.

    \begin{prop}
    \label{prop:FreeCumulant}
        Under the hypotheses of Theorem
        \ref{thm:Main},
        for any fixed $d \in \mathbb{N}$
        and $a_1,\dots,a_k \in \mathbb{C}$, we
        have
        
            \begin{equation*}
            \begin{split}
                &\lim_{N \rightarrow \infty}\frac{1}{N}
                \sum_{\alpha,\beta \vdash d} C_N(\alpha,\beta)
                p_{\alpha}(a_1,\dots,a_k)
                p_{\beta}(b_1^{(N)},\dots,b_N^{(N)}) \\
                =&\ p_d(a_1,\dots,a_k)
                \sum_{\beta \vdash d} (-1)^{1+\ell(\beta)} 
                \mon_0(d,\beta)\psi_{\beta},
            \end{split}
            \end{equation*}
            
        \noindent
        where $\psi_\beta=\prod_i \psi_{\beta_i}$.
    \end{prop}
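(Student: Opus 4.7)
The plan is to substitute the genus expansion \eqref{eqn:LeadingDerivatives} for $C_N(\alpha,\beta)$ into the double sum, determine the order of growth in $N$ of each summand, and observe that only one ``channel'' survives the $\tfrac{1}{N}$ normalization.

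First I would analyze the growth of $p_\beta(b_1^{(N)},\dots,b_N^{(N)})$. Writing out the power sums multiplicatively and factoring out $N$ from each variable, I get
\begin{equation*}
    p_\beta(b_1^{(N)},\dots,b_N^{(N)})
    = \prod_{j=1}^{\ell(\beta)} N^{\beta_j+1} \cdot
    \frac{1}{N}\sum_{i=1}^N \Big(\frac{b_i^{(N)}}{N}\Big)^{\beta_j}
    = N^{d+\ell(\beta)} \prod_{j=1}^{\ell(\beta)}
    \int x^{\beta_j}\,\mathrm{d}\nu^{(N)}(x).
\end{equation*}
Since $\nu^{(N)}$ is supported in $[-M,M]$ and converges weakly to $\nu$, each bracketed integral converges to $\psi_{\beta_j}$ by the bounded continuous functional calculus applied to the polynomial $x^{\beta_j}$. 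Thus $p_\beta(b_1^{(N)},\dots,b_N^{(N)}) = N^{d+\ell(\beta)}(\psi_\beta + o(1))$.

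Next I would insert this together with \eqref{eqn:LeadingDerivatives} into the summand:
\begin{equation*}
    \frac{1}{N} C_N(\alpha,\beta) p_\alpha(a) p_\beta(b^{(N)})
    = (-1)^{\ell(\alpha)+\ell(\beta)} N^{1-\ell(\alpha)}
    \Big(\sum_{g\geq 0} \frac{\mon_g(\alpha,\beta)}{N^{2g}}\Big) p_\alpha(a) \big(\psi_\beta+o(1)\big).
\end{equation*}
Since $d$ is fixed and $\mon_0(\alpha,\beta)$ is a finite Hurwitz count, the genus expansion truncates to $\mon_0(\alpha,\beta) + O(N^{-2})$ in the large-$N$ limit. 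The power of $N$ is then $N^{1-\ell(\alpha)}$, which vanishes for $\ell(\alpha)\geq 2$ and contributes only for $\ell(\alpha)=1$, i.e.\ $\alpha=(d)$. For $\alpha=(d)$ the sign becomes $(-1)^{1+\ell(\beta)}$ and $p_\alpha(a)=p_d(a_1,\dots,a_k)$. Summing the surviving terms over $\beta\vdash d$ yields exactly the stated right-hand side.

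The only mildly delicate point — and really the main thing to verify carefully — is that the $O(N^{-2})$ genus-correction and the $o(1)$ error from weak convergence can be absorbed uniformly over the finitely many pairs $(\alpha,\beta)$ with $\alpha,\beta\vdash d$. Since the index set $\{(\alpha,\beta):\alpha,\beta\vdash d\}$ is finite and $d$ is fixed, this is immediate: one takes the maximum over the finitely many error terms. No representation-theoretic or analytic input beyond Theorem \ref{thm:LeadingDerivatives} and the boundedness hypothesis on $\nu^{(N)}$ is required.
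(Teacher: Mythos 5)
Your argument is essentially the paper's own proof: substitute the genus expansion \eqref{eqn:LeadingDerivatives}, rescale the $b$-variables so that weak convergence of $\nu^{(N)}$ (whose support lies in $[-M,M]$ by hypothesis) supplies the limiting moments $\psi_{\beta_j}$, and observe that the residual factor $N^{1-\ell(\alpha)}$ kills every term except those with $\alpha=(d)$. Your exponent bookkeeping is correct, and you are right that uniformity over the finitely many pairs $\alpha,\beta\vdash d$ is automatic.

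The one step you assert rather than prove is the truncation $\sum_{g\geq 0}\mon_g(\alpha,\beta)N^{-2g}=\mon_0(\alpha,\beta)+O(N^{-2})$. The reason you offer --- that $d$ is fixed and $\mon_0(\alpha,\beta)$ is a finite count --- says nothing about the infinitely many higher-genus terms: for fixed $d$ the numbers $\mon_g(\alpha,\beta)$ grow with $g$, so one must check that this growth is at most exponential in $g$ in order for the series to converge at all for $N$ large and for its tail to be $O(N^{-2})$. This is easy (an $r$-step monotone walk is determined by its starting permutation and a weakly increasing sequence of $r$ labelled transpositions, yielding for instance $\mon_g(\alpha,\beta)\leq (d!)^{2g+2d}$, which is exactly the bound the paper records), but it is the one genuine estimate in the proof and should be written down rather than bundled into the phrase ``the genus expansion truncates.'' With that bound supplied, your proof coincides with the paper's.
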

    
    \begin{proof}
        According to \eqref{eqn:LeadingDerivatives},
        we have 
        
        \begin{equation*}
        \begin{split}
                &\frac{1}{N}
                \sum_{\alpha,\beta \vdash d} C_N(\alpha,\beta)
                p_{\alpha}(a_1,\dots,a_k)
                p_{\beta}(b_1^{(N)},\dots,b_N^{(N)}) \\
                =& \sum_{\alpha \vdash d} (-1)^{\ell(\alpha)}
                \frac{p_{\alpha}(a_1,\dots,a_k)}{N^{\ell(\alpha)-1}}
                \sum_{\beta \vdash d} (-1)^{\ell(\beta)}
                \frac{
                p_{\beta}(\frac{b_1^{(N)}}{N},\dots,\frac{b_N^{(N)}}{N})}
                {N^{\ell(\beta)}}
                \sum_{g=0}^{\infty} \frac{\mon_g(\alpha,\beta)}{N^{2g}}
        \end{split}
        \end{equation*}
        
        \noindent
        for any $N \geq d$.
        From the definition of $\mon_g(\alpha,\beta)$, 
        we have the upper bound
        
        \begin{equation*}
            \mon_g(\alpha,\beta) \leq (d!)^{2g + 
            \ell(\alpha)+\ell(\beta)} \leq
            (d!)^{2g+2d}.
        \end{equation*}
        
        \noindent
        Thus
        
        \begin{equation*}
            \sum_{g=0}^{\infty} \frac{\mon_g(\alpha,\beta)}{N^{2g}}
            = \mon_0(\alpha,\beta) +O\left(\frac{1}{N^2}\right)
        \end{equation*}

        \noindent
        as $N \rightarrow \infty$, uniformly in $\alpha,\beta$.
        
        The weak convergence of $\nu^{(N)}$ to $\nu$,
        the measure on $[-M,M]$ with moments
        $\{\psi_m : m \in \mathbb{N}\}$, 
        is equivalent to the limits
        
        \begin{equation*}
            \lim_{N \rightarrow \infty}
            \frac{p_m(\frac{b_1^{(N)}}{N},\dots,\frac{b_N^{(N)}}{N})}{N}
            =\psi_m, \quad m \in \mathbb{N}.
        \end{equation*}
    \end{proof}

\subsection{}
The numbers $\mon_g(d,\beta)$ are \emph{one-part}
monotone double Hurwitz numbers; their classical
counterparts $\hur_g(d,\beta)$ are analyzed in \cite{GJV}.
The sum

    \begin{equation*}
        K_d = \sum_{\beta \vdash d} (-1)^{1+\ell(\beta)} 
        \mon_0(d,\beta)\psi_{\beta}    
    \end{equation*}
    
\noindent
which emerges in Proposition \ref{prop:FreeCumulant}
is an element of $\mathbb{Z}[\psi_1,\dots,\psi_d]$,
homogeneous of degree $d$ with respect to the grading
$\deg(\psi_m)=m$.  In fact, $K_d$
is, up to a simple factor, the $d$th \emph{free cumulant}
$\kappa_d$ of the measure $\nu$:

    \begin{equation}
    \label{eqn:FreeCumulant}
        K_d = (d-1)! \kappa_d.
    \end{equation}

We recall that the free cumulants of a probability
measure are obtained by replacing the lattice of all
partitions with the lattice of noncrossing partitions
in the moment-cumulant formula, see e.g. \cite{NS}.
The identity \eqref{eqn:FreeCumulant} may be
established in a purely combinatorial way, by
viewing the noncrossing partition lattice
$NC(d)$ as the set of geodesic paths 
$(1)\dots (d) \rightarrow (1\ \dots\ d)$
on the Cayley graph of $S(d)$ and using the
Kreweras antiautomorphism.  For our purposes,
we only require explicit knowledge of $K_1$
and $K_2$, which can be computed directly
from the definition of the monotone
double Hurwitz numbers:

    \begin{equation*}
    \begin{split}
        &\mon_0(1,1)=1 \implies K_1=\psi_1 \\
        &\mon_0(2,2) = \mon_0(2,11) = 1 \implies
        K_2=\psi_2-\psi_1^2.
    \end{split}
    \end{equation*}

\noindent
We thus leave the proof of
\eqref{eqn:FreeCumulant}
to the interested reader.  

\subsection{}
The absolute summability of the series

    \begin{equation*}
        \sum_{d=1}^{\infty} \frac{z^d}{d!} K_d
    \end{equation*}
    
\noindent
follows from \cite[Theorem 3.4]{GGN3}.
Arguing as in \cite[Theorem 4.1]{GGN3},
Proposition \ref{prop:FreeCumulant}
may be promoted to the following
scaling limit of the HCIZ integral,
which is closely related to the 
results of \cite{CS,GM}.

    \begin{prop}
    \label{prop:FPlimit}
        Let $k \in \mathbb{N}$ be fixed.
        Under the assumptions of Theorem
        \ref{thm:Main}, there exists $\varepsilon > 0$
        such that
        
        \begin{equation*}
        \frac{1}{N}\log
        \int\limits_{U(N)} e^{z\Tr\diag(a_1,\dots,a_k,0,\dots,0)
        U\diag(b_1^{(N)},\dots,b_N^{(N)})U^{-1}}
        \mathrm{d}U 
        \rightarrow
        \sum_{d=1}^\infty \frac{z^d}{d!} p_d(a_1,\dots,a_k)K_d,
        \end{equation*}
        
        \noindent
        uniformly on compact subsets of
        $\{(z;a_1,\dots,a_k) \in \mathbb{C} \times \mathbb{C}^k
        : |za_i| < \varepsilon \}$.
    \end{prop}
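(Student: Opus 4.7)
The plan is to expand the logarithm of the HCIZ integral as a Maclaurin series in $z$, identify the limit of each Taylor coefficient by means of Proposition \ref{prop:FreeCumulant}, and then promote this coefficient-wise convergence to the claimed uniform convergence on compact polydisc neighbourhoods of the origin via a dominated convergence and Vitali argument.

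First I would write, using that power sums are invariant under adjunction of zero arguments,
\begin{equation*}
\frac{1}{N}\log\int\limits_{U(N)} e^{z\Tr\diag(a_1,\dots,a_k,0,\dots,0)U\diag(b_1^{(N)},\dots,b_N^{(N)})U^{-1}}\mathrm{d}U = \sum_{d=1}^{\infty}\frac{z^d}{d!}\,T_d^{(N)}(a_1,\dots,a_k),
\end{equation*}
where
\begin{equation*}
T_d^{(N)}(a_1,\dots,a_k) = \frac{1}{N}\sum_{\alpha,\beta\vdash d}C_N(\alpha,\beta)\,p_\alpha(a_1,\dots,a_k)\,p_\beta(b_1^{(N)},\dots,b_N^{(N)}).
\end{equation*}
Proposition \ref{prop:FreeCumulant} already supplies the pointwise limit $T_d^{(N)}(a_1,\dots,a_k) \longrightarrow p_d(a_1,\dots,a_k)\,K_d$ for every fixed $d$.

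The hard part is to dominate $|T_d^{(N)}(a_1,\dots,a_k)|$ uniformly in $N$ by a function of $d$ which, when multiplied by $|z|^d/d!$, is summable over $d$ on a small polydisc. To this end, I would substitute \eqref{eqn:LeadingDerivatives} into the definition of $T_d^{(N)}$, use the uniform bound $|b_i^{(N)}|/N\leq M$ to control the factors $|p_\beta(b^{(N)})|/N^{\ell(\beta)}$, and invoke the estimate $\mon_g(\alpha,\beta)\leq(d!)^{2g+2d}$ recalled in the proof of Proposition \ref{prop:FreeCumulant} to bound the inner sum over $g$ by a convergent geometric series once $N$ is sufficiently large. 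The outcome is an inequality of the form $|T_d^{(N)}(a_1,\dots,a_k)|\leq D_d\,(\max_j|a_j|)^d$ valid for all large $N$, where $\sum_d D_d\,z^d/d!$ has positive radius of convergence $\varepsilon$. This is essentially the content of \cite[Theorem 3.4]{GGN3}, which simultaneously guarantees the absolute summability of the limit series $\sum_d(z^d/d!)p_d(a_1,\dots,a_k)K_d$ on the polydisc $\{|za_j|<\varepsilon\}$.

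Armed with this uniform majorant, dominated convergence over $d$ produces pointwise convergence at every $(z,a_1,\dots,a_k)$ in the polydisc. To upgrade pointwise to uniform convergence on compact subsets, I would apply Vitali's theorem: both sides are holomorphic in $(z,a_1,\dots,a_k)$ on $\{|za_j|<\varepsilon\}$ (the left-hand side is well defined there because the HCIZ integral equals $1$ at the origin and hence is nonvanishing on a neighbourhood of it), and the same majorization supplies the uniform local boundedness in $N$ that Vitali requires. This completes the argument.
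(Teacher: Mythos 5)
Your outline is exactly the paper's route: the paper's ``proof'' of Proposition \ref{prop:FPlimit} is a single sentence deferring to \cite[Theorems 3.4 and 4.1]{GGN3}, and the argument carried out there is precisely your scheme --- expand $\log$ of the HCIZ integral in the basis $p_\alpha\otimes p_\beta$, get coefficient-wise convergence from the genus expansion \eqref{eqn:LeadingDerivatives}, and promote it via a uniform-in-$N$ majorant on a polydisc plus normal-families/Vitali reasoning. So in spirit you have reconstructed the intended proof, and your remark that one must separately justify that the series $\sum_d \frac{z^d}{d!}T_d^{(N)}$ really is a branch of $\log$ of the integral on a polydisc \emph{independent of} $N$ is a point the paper glosses over entirely.

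There is, however, one concrete soft spot in the domination step as you wrote it. The bound $\mon_g(\alpha,\beta)\leq (d!)^{2g+2d}$ is only strong enough to control the inner sum over $g$ when $N$ exceeds roughly $(d!)^2$; at a \emph{fixed} large $N$ it gives nothing for all sufficiently large $d$, so it cannot produce a majorant $D_d(\max_j|a_j|)^d$ that is simultaneously valid for all $d$ and summable against $z^d/d!$ --- which is what your dominated-convergence-over-$d$ step requires. (This crude bound is used in the paper only in the proof of Proposition \ref{prop:FreeCumulant}, where $d$ is fixed and $N\to\infty$, a setting where it suffices.) The uniform majorant genuinely needs the sharper counting estimates on monotone walks from \cite[Theorem 3.4]{GGN3}, e.g.\ bounds of the shape $\sum_{\beta\vdash d}\mon_g(\alpha,\beta)\leq C^{d+g}\,d!\,$ up to polynomial factors, which make $\sum_d \frac{\varepsilon^d}{d!}D_d$ converge for small $\varepsilon$ and also handle the range $d>N$ where Theorem \ref{thm:LeadingDerivatives} as stated does not apply. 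Since you ultimately cite that theorem for the summability, the gap is one of attribution rather than of strategy, but as written the displayed factorial bound does not do the work you assign to it.
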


    Tuning the coupling constant to $z=N^{-1/2}$,
    Proposition \ref{prop:FPlimit} yields the following corollary.
    
    \begin{cor}
    \label{cor:TilingLimit}
        Let $k \in \mathbb{N}$ be fixed.
        Under the assumptions of Theorem
        \ref{thm:Main}, we have
        the $N \rightarrow \infty$
        asymptotic expansion
        
        \begin{equation*}
        \log
        \int\limits_{U(N)} e^{\frac{1}{\sqrt{N}}
        \Tr\diag(a_1,\dots,a_k,0,\dots,0)
        U\diag(b_1^{(N)},\dots,b_N^{(N)})U^{-1}}
        \mathrm{d}U \sim
        \sum_{d=1}^\infty 
        \frac{K_d}{d!} p_d(a_1,\dots,a_k)N^{1-\frac{d}{2}},
        \end{equation*}
        
        \noindent
        uniformly on compact subsets
        of $\mathbb{C}^k$.
    \end{cor}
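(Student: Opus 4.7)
The corollary follows from Proposition \ref{prop:FPlimit} by specializing the coupling constant to $z = N^{-1/2}$ and multiplying through by $N$. My plan is first to verify that this substitution is legitimate, and then to interpret the resulting identity as an asymptotic expansion.

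For the substitution, fix a compact subset $K \subset \mathbb{C}^k$, set $R = \max_{(a_1,\dots,a_k) \in K,\ 1 \leq i \leq k} |a_i|$, and choose $N_0$ with $R/\sqrt{N_0} < \varepsilon$. The set
$$C = (\{0\} \times K) \cup \bigcup_{N \geq N_0} \bigl(\{N^{-1/2}\} \times K\bigr)$$
is a compact subset of the admissible region $\{|za_i| < \varepsilon\}$, since any sequence in $C$ either lies in some fixed $\{N^{-1/2}\} \times K$ or accumulates on $\{0\} \times K$. The uniform convergence in Proposition \ref{prop:FPlimit} on $C$ therefore yields
$$\frac{1}{N} \log \int\limits_{U(N)} e^{\frac{1}{\sqrt{N}} \Tr \diag(a_1,\dots,a_k,0,\dots,0) U \diag(b_1^{(N)},\dots,b_N^{(N)}) U^{-1}} \mathrm{d}U = \sum_{d=1}^\infty \frac{p_d(a_1,\dots,a_k) K_d}{d!\, N^{d/2}} + o(1)$$
uniformly on $K$ as $N \to \infty$. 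Multiplying through by $N$ gives the stated identity with total error $o(N)$, which is already sufficient against the leading $N^{1/2}$ term.

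The main obstacle is upgrading this to an asymptotic expansion in the strong sense that truncation at order $D$ leaves an error of $o(N^{1-D/2})$ rather than just $o(N)$. The absolute convergence established in \cite[Theorem 3.4]{GGN3} controls the tail of the series on the right: for $(a_1,\dots,a_k) \in K$,
$$\sum_{d > D} \frac{p_d(a_1,\dots,a_k) K_d}{d!\, N^{d/2}} = O\bigl(N^{-(D+1)/2}\bigr)$$
uniformly, so the truncated series approximates its total sum at the desired order. To match this on the left-hand side, I would return to the explicit Maclaurin expansion of the HCIZ logarithm and substitute the refined formula \eqref{eqn:LeadingDerivatives} for $C_N(\alpha,\beta)$: at each fixed $d$, the leading contribution at order $N^{1-d/2}$ comes from $\alpha = (d)$ and $g = 0$, while every other pair $(\alpha, g)$ is suppressed by a further power of $N^{-1/2}$. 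The combinatorial bound $\mon_g(\alpha,\beta) \leq (d!)^{2g+\ell(\alpha)+\ell(\beta)}$ established in the proof of Proposition \ref{prop:FreeCumulant} guarantees uniform control of both the $g$-sum and the $d$-sum, completing the asymptotic analysis uniformly on compact subsets of $\mathbb{C}^k$.
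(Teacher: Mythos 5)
Your first paragraph is precisely the paper's proof: the paper disposes of the corollary with the single sentence ``Tuning the coupling constant to $z=N^{-1/2}$, Proposition \ref{prop:FPlimit} yields the following corollary,'' and your compact-set argument is the right way to justify substituting an $N$-dependent value of $z$ into a locally uniform limit. Where you go beyond the paper is in noticing that this substitution, after multiplying by $N$, controls the error only to $o(N)$. One correction: $o(N)$ is \emph{not} ``sufficient against the leading $N^{1/2}$ term'' --- an $o(N)$ error can be of order $N^{0.9}$ and swamp $N^{1/2}$ --- so the upgrade you describe in your final paragraph is not a refinement but the entire content of the corollary (and of its use in the next subsection, where the expansion is truncated after $d=2$ with an $O(N^{-1/2})$ error). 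Your proposed repair, returning to the Maclaurin series with \eqref{eqn:LeadingDerivatives} inserted and isolating the $\alpha=(d)$, $g=0$ contribution at order $N^{1-d/2}$, with the bound $\mon_g(\alpha,\beta)\le (d!)^{2g+2d}$ controlling the $g$- and $d$-tails, is the correct mechanism and is essentially how \cite[Theorem 4.1]{GGN3} proceeds. However, one obstruction remains unaddressed in your sketch (and in the paper): the expansion replaces the empirical moments $\frac{1}{N}p_m(b_1^{(N)}/N,\dots,b_N^{(N)}/N)$ by their limits $\psi_m$, and weak convergence of $\nu^{(N)}$ provides no rate for that replacement; already the $d=1$ term contributes an uncontrolled error of size $\sqrt{N}\,|\frac{1}{N}p_1(b^{(N)}/N)-\psi_1|$. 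A complete proof must either keep the $N$-dependent moments in the coefficients or assume a convergence rate; this is a defect of the statement rather than of your strategy, but it should be confronted explicitly.
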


    \subsection{}
    Combining Corollary \ref{cor:TilingLimit}
    with the fact that
    
    \begin{equation*}
        \log \frac{a}{e^a-1}=
        -\log \frac{e^a-1}{a} = -\frac{1}{2} \frac{a^1}{1!}
        - \frac{1}{12} \frac{a^2}{2!} + \dots
    \end{equation*}
    
    \noindent 
    is negative one times the generating
    function for the classical cumulants
    $c_1,c_2,\dots$ of uniform measure on
    $[0,1]$, Proposition \ref{prop:Key} 
    yields the asymptotic expansion
    
    \begin{equation*}
        \log L_k^{(N)}(\frac{a_1}{\sqrt{N}},\dots,
        \frac{a_k}{\sqrt{N}}) \sim \sum_{d=1}^\infty 
        \frac{K_d-c_d}{d!} p_d(a_1,\dots,a_k)N^{1-\frac{d}{2}},
    \end{equation*}

    \noindent
    uniformly on compact subsets of $\mathbb{C}^k$.
    In particular,
    
    \begin{equation*}
        \log L_k^{(N)}(\frac{a_1}{\sqrt{N}},\dots,
        \frac{a_k}{\sqrt{N}})
        =\sqrt{N}(\psi_1-\frac{1}{2})p_1(a_1,\dots,a_k)
        +\frac{1}{2}(\psi_2-\psi_1^2-\frac{1}{12})p_2(a_1,\dots,a_k)
        + O\left( \frac{1}{\sqrt{N}} \right)
    \end{equation*}

    \noindent
    as $N \rightarrow \infty$.  
    Since a $k \times k$ standard GUE random 
    matrix $X_k$ is characterized by the
    log-Laplace transform
    
    \begin{equation*}
        \log \mathbf{E}[e^{\Tr AX_k}]=
        \frac{1}{2}\Tr A^2,
    \end{equation*}
    
    \noindent
    and since $H(k)$ is a finite-dimensional
    Euclidean space with the inner product
    $\langle A, B\rangle = \Tr AB$, Theorem
    \ref{thm:Main} follows from the above
    quadratic approximation and the L\'evy
    continuity theorem.

\bibliographystyle{amsplain}

\begin{thebibliography}{10}

    \bibitem{AvM}
    M. Adler, P. van Moerbeke,
    \textit{The spectrum of coupled random matrices},
    Ann. Math. \textbf{149} (1999), 921-976.

    \bibitem{Biane}
    P. Biane,
    \textit{Parking functions of types $A$ and $B$},
    Electron. J. Combin. \textbf{9} (2002), \#N7.

    	
	\bibitem{CS}
	B. Collins, P. \'Sniady,
	\textit{New scaling of Itzykson-Zuber integrals},
	Ann. IH Poincar\'e --- PR \textbf{43} (2007),
	139-146.

	
	\bibitem{DM}
	E. Duse, A. Metcalfe,
	\textit{Asymptotic geometry of discrete
	interlaced patterns: Part I},
	\textsf{arXiv:1412.6653v1}.
	
	\bibitem{GW}
	R. Goodman, N. Wallach,
	\textit{Symmetry, Representations, and
	Invariants},
	Springer Graduate Texts in Mathematics 
	\textbf{255}, 2009.
		
	\bibitem{GP}
	V. Gorin, G. Panova,
	\textit{Asymptotics of symmetric polynomials
	with applications to statistical mechanics and
	representation theory}, \textsf{arXiv:1301.0634v4}.
		
	\bibitem{GGN3}
	I. P. Goulden, M. Guay-Paquet, J. Novak,
	\textit{Monotone Hurwitz numbers and the
	HCIZ integral}, Ann. Math. Blaise Pascal 
	\textbf{21} (2014), 71-89.
	
	
	\bibitem{GJV}
	I. P. Goulden, D. M. Jackson, R. Vakil,
	\textit{Towards the geometry of double
	Hurwitz numbers},
	Adv. Math. \textbf{198} (2005), 43-92.
	
	\bibitem{GM}
	A. Guionnet, M. Maida,
	\textit{A Fourier view on the $R$-transform
	and related asymptotics of spherical integrals},
	J. Funct. Anal. \textbf{222} (2005), 435-490.
	
	\bibitem{HC}
	Harish-Chandra,
	\textit{Differential operators on a semisimple
	Lie algebra}, 
	Amer. J. Math. \textbf{79} (1957), 87-120.
	
	\bibitem{Hurwitz}
	A. Hurwitz,
	\textit{\"Uber die Anzahl der Riemann'schen
	Fl\"achen mit gegebenen Verzweigungspunkten},
	Math. Ann. \textbf{55} (1902), 53-66.
	
	\bibitem{IZ}
	C. Itzykson, J.-B. Zuber, 
	\textit{The planar approximation II},
	J. Math. Phys. \textbf{21} (1980), 411-421.
	
	\bibitem{JN}
	K. Johansson, E. Nordenstam,
	\textit{Eigenvalues of GUE minors},
	Electron. J. Probab. \textbf{50} (2006),
	1342-1371.
	
	\bibitem{Kirillov}
	A. A. Kirillov,
	\textit{Lectures on the Orbit Method},
	AMS Graduate Studies in Mathematics \textbf{64},
	2004.
	
	\bibitem{KO}
	R. Kenyon, A. Okounkov,
	\textit{Limit shapes and the complex Burgers equation},
	Acta. Math. \textbf{199} (2007), 263-302.
	
	\bibitem{NY1}
	E. Nordenstam, B. Young,
	\textit{Domino shuffling on 
	Novak half-hexagons and Aztec
	half-diamonds},
	Electron. J. Combin. \textbf{18} (2011),
	\#P181.
	
	\bibitem{NY2}
	E. Nordenstam, B. Young,
	\textit{Correlations for the
	Novak process},
	Discrete Math. Theor. Comp. Sci.
	\textbf{AR} (2012), 643-654.
	
	\bibitem{NS}
	J. Novak, P. \'Sniady,
	\textit{What is... a free cumulant?}
	Not. Amer. Math. Soc. \textbf{58}
	(2011), 300-302.
	
	\bibitem{Okounkov}
	A. Okounkov,
	\emph{Toda equations for Hurwitz numbers},
	Math. Res. Lett. \textbf{7} (2000), 447-453.
	
	\bibitem{OR}
	A. Okounkov, N. Reshetikhin,
	\textit{The birth of a random matrix},
	Mosc. Math. J. \textbf{6} (2006),
	553-556.
	
	\bibitem{Petrov}
	L. Petrov,
	\textit{Asymptotics of random lozenge tilings via
	Gelfand-Tsetlin schemes}, 
	Probab. Theory. Relat. Fields, to appear.
	
	\bibitem{Stanley}
	R. P. Stanley,
	\textit{Parking functions and noncrossing
	partitions},
	Electron. J. Combin. \textbf{4} (1997),
	\#R20.
	
	
	\bibitem{Weyl}
	H. Weyl,
	\textit{The Classical Groups: Their Invariants
	and Representations},
	Princeton University Press, 1939.
	

\end{thebibliography}

\end{document}